\definecolor{green}{rgb}{0.796,0.948,0.816}
\newtheorem{Remark}{Remark}
\newtheorem{theorem}{$\mathbf{Theorem}$}
\newtheorem{lemma}[theorem]{$\mathbf{Lemma}$}
\newtheorem{corollary}[theorem]{$\mathbf{Corollary}$}
\begin{document}
\title{The Private Key Capacity of a Cooperative Pairwise-Independent Network}
\author{\IEEEauthorblockN{Peng Xu\IEEEauthorrefmark{1},
  Zhiguo Ding\IEEEauthorrefmark{2} and Xuchu Dai\IEEEauthorrefmark{1}\\}
\IEEEauthorblockA{\IEEEauthorrefmark{1}Dept. of Elec. Eng. and Info Sci., Univ. Science \& Tech. China, Hefei, Anhui, 230027, P.R.China\\}
\IEEEauthorblockA{\IEEEauthorrefmark{2}School of
Computing and Communications, Lancaster University, LA1 4WA, UK\\}
\vspace{-2em}}
 \maketitle

\begin{abstract}
This paper studies the private key generation of a cooperative pairwise-independent network (PIN) with
$M+2$ terminals (Alice, Bob and $M$ relays), $M\geq 2$. In this PIN, the  correlated sources observed by
every pair of terminals are independent of those sources observed by any other pair of terminal.
All the terminals can communicate with each other over a public channel which is also observed by Eve noiselessly.
The objective is to generate a private key between Alice and Bob under the help of the $M$ relays;
such a private key needs to be protected not only from Eve but also from individual relays simultaneously.
The private key capacity of this PIN model is established,
whose lower bound is obtained by proposing a novel random binning (RB) based key generation algorithm,
and the upper bound is obtained based on the construction of $M$ enhanced source models.
The two bounds are shown to be exactly the same. Then, we consider a cooperative wireless network
and use the estimates of fading channels to generate private keys. It has been shown that the
proposed RB-based algorithm can achieve a multiplexing gain $M-1$, an improvement in comparison with
the existing XOR-based algorithm whose achievable  multiplexing gain is $\lfloor M \rfloor/2$.

\end{abstract}\vspace{-0.2em}
\begin{keywords}
PIN model, Private key capacity, Multiplexing gain \end{keywords}
\vspace{-1.5em}
\section{Introduction}\label{i}
 The pairwise-independent network (PIN) was introduced in \cite{ye2007group} for secret key generation.
 Since then, many other related works have also investigated a variety of PIN models  (e.g.,
 \cite{nitinawarat2010secret,lai2012simultaneously,lai2013simultaneously}),
and each of them aimed to find the secret key capacity  of a particular PIN model.
 The PIN model is actually a special case of the multi-terminal ``source model"  \cite{csiszar2000common,csiszar2004secrecy},
  in which the  correlated sources observed by
every pair of terminals are independent of those sources observed by any other pair of terminal.
Note that the so-call ``source model'' was first studied by Ahlswede and Csis\'{a}r for generating secret keys between two terminals
using their correlative observations and  public transmissions \cite{ahlswede1993common}.

In recent years, the PIN model has been applied to practical wireless communication networks for key generation.
Based on channel reciprocity, the correlated source observations in a PIN model can be obtained via estimating the wireless fading channels
associated with legitimate terminals. This is because all the wireless channels in a network are mutually independent as long as
the terminals are half-wavelength away from each other  \cite{tse2005fundamentals}.
This physical layer (PHY) security  approach has been recognized as a promising solution 
 for generating secret key in  recent years (e.g.,\cite{ye2010information,wang2012cooperative,
 lai2012cooperative,zhou2014secret}).

Existing works have demonstrated that  user cooperation can effectively enlarge the  key capacity
 by introducing  additional helper nodes for cooperative key generation \cite{csiszar2000common
 ,lai2012cooperative,zhou2014secret}. The work in \cite{csiszar2000common}
  first studied cooperative key generation (including the generation of secret keys and private keys) in a  single-helper discrete
   memoryless source (DMS) model, where
  the private key  needs to be protected not only from the eavesdropper but also from all the helper node.
  The works in \cite{lai2012cooperative,zhou2014secret} utilized estimates of wireless channels for the key generation
   in cooperative wireless networks, in which
   the relay nodes provide additional resources of wireless fading channels. In \cite{lai2012cooperative},
     a relay-assisted algorithm was proposed to enhance the secret key rate for the scenario  without secrecy constraints at relays,
     and then  an XOR-based algorithm was proposed to generate a relay-oblivious key, (i.e., private key). In \cite{zhou2014secret}, a multi-antenna relay was  considered to
  help the legitimate terminals to generate a secret key, and then
  the optimal attacker's strategy was characterized to minimize the secret key rate when Eve
  is an active attacker.

   The problem of private key generation is investigated in this paper.
    We  consider a particular cooperative PIN model with $M+2$ terminals (Alice, Bob,  $M$ relays)
    and an eavesdropper (Eve), where $M\geq 2$.
  Under the help of relays,  Alice and Bob wish to establish a private key which should be protected from  not only Eve but also from individual relays simultaneously.
   One of the main contributions of this paper is to find the private key capacity of this PIN model. To obtain the lower bound,
    we propose a  novel algorithm for generating the private key. Specifically, using the observations at relays and the transmissions over the public channel, Alice and Bob first agree on $M$ common messages, each of which is open to a certain relay.
Then  a random binning process  is adopted in the  key distillation step to  map these insecure common messages into a private key.
Such an algorithm is termed as the ``RB-based algorithm" for simplicity.
On the other hand, the upper bound of the private key capacity is obtained by considering $M$ enhanced source models, each of which
relaxes the secrecy constraints on some relays, and assumes that the relay observations are known by  Alice or Bob
in advance. Such an upper bound is tight and matches with the lower bound.

The proposed RB-based private key generation algorithm
 in the PIN model can be extended to more practical wireless communications. In particular,  we  consider a
cooperative wireless network, in which Alice, Bob and the $M$ relays use
estimates of wireless channels  as the correlative source observations. It is assumed that Alice and Bob are far
 away from each other, so {\em there does not exist the direct
link} between Alice and Bob.  Compared to the  XOR-based algorithm
 in \cite{lai2012cooperative} whose multiplexing gain
is $\lfloor M \rfloor/2$ for the considered wireless network, the proposed RB-based algorithm achieves a larger multiplexing gain  $M-1$.

  \vspace{-1em}
\section{Pairwise Independent Network Model}\label{ii}

Consider a  DMS model, where  Alice and Bob, with the help of $M\geq 2$  relays, wish to establish a
private  key
that needs to be  protected  from  Eve and individual relays simultaneously. All relays are assumed to be curious but honest: they will
comply with the proposed transmission schemes for helping Alice and Bob to generate a key, but would also try to intercept
the key information if they can \cite{lai2012cooperative}. The nodes  can communicate to each other over a noiseless public channel
whose capacity is infinite, but the transmitted information over the public channel is also available to Eve noiselessly.
Eve is passive in the sense that it only receives but not transmits information.

\vspace{-0.3em}
 \begin{figure}[htbp]\centering
    \epsfig{file=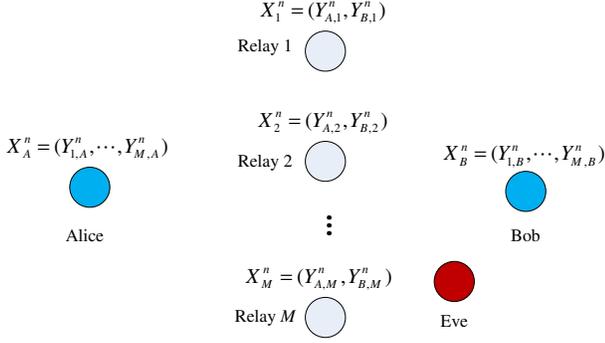, width=0.45\textwidth,clip=}
\caption{The considered cooperative PIN model with $M$ relays.}\label{pair-wise}
\end{figure}
\vspace{-0.5em}

{ For $\forall m\in\{1,\cdots,M\}$, let $Y_{m,A}$  and $Y_{A,m}$ denote the correlative source observations at  Alice and relay $m$, respectively.
$Y_{m,B}$  and $Y_{B,m}$ denote the correlative source observations at Bob and relay $m$, respectively.} Specifically,
 Alice observes $n$ i.i.d. repetitions of random variable
$X_A=(Y_{1,A},\cdots,Y_{M,A})$, denoted by $X_A^n=(Y_{1,A}^n,\cdots, Y_{M,A}^n)$; Bob observes $n$ i.i.d. repetitions of random variable
$X_B=(Y_{1,B},\cdots,Y_{M,B})$, denoted by $X_B^n=(Y_{1,B}^n,\cdots, Y_{M,B}^n)$; relay $m$
observes
$n$ i.i.d. repetitions of random variable
$X_m=(Y_{A,m},Y_{B,m})$, denoted by $X_m^n=(Y_{A,m}^n,Y_{B,m}^n)$. This DMS model  is a PIN
 in the sense that \cite{ye2007group}
\begin{align}
  I(Y_{i,\alpha},Y_{\alpha,i};
  \{Y_{j,\beta},Y_{\beta,j}:(j,\beta)\neq (i,\alpha)\}
  )=0,\nonumber\\
  \textrm{ for } i,j\in\{1,\cdots,M\};
  \alpha, \beta\in\{A,B\}.
\end{align}
This means that Alice and relay $m$ have access  to a pair $(Y_{m,A},Y_{A,m})$ which is independent of any other
pair of source observations, so is $(Y_{m,B},Y_{B,m})$. Note that there does not exist correlated source
 observations between Alice and Bob, the private key can be generated only via the help from the relays.
 Moreover, we do not consider correlated sources observed by any pair of relays, since the common randomness
 shared by any pair of  relays cannot contribute to the private key rate.

More definitions  are given as follows.
\begin{itemize}
  \item[$\bullet$] Without loss of generality, assume that the nodes use the public channel to communicate
  in a round robin fashion over $q$ rounds. Let $1\leq l \leq q$ and $1\leq m \leq M$.
  Specifically,
  relay $m$  transmits during rounds $l$ that satisfy  $ l\mod (M+2)=m$; Alice transmits
  during rounds $l$ that satisfy $  l\mod (M+2)=M+1$; Bob transmits
  during rounds $l$ that satisfy $  l\mod (M+2)=0$.
  \item[$\bullet$] A $(2^{n\tilde{R}_1},\cdots,2^{n\tilde{R}_q})$ code for the cooperative key generation problem
  consists of :

  $\quad$(i) $M+2$ randomized encoders, one for each node. In  rounds $l$ satisfying $ l\mod (M+2)=m$,
  relay $m$ generates an index $F_l\in \{1,\cdots,2^{{n\tilde{R}_l}}\}$ according to $p(f_l|x_m^n,f^{l-1})$; in  rounds $l$ satisfying $ l\mod (M+2)=M+1$,
  Alice generates an index $F_l\in \{1,\cdots,2^{{n\tilde{R}_l}}\}$ according to $p(f_l|x_A^n,f^{l-1})$; in  rounds $l$ satisfying $ l\mod (M+2)=0$,
  Bob generates an index $F_l\in \{1,\cdots, 2^{{n\tilde{R}_l}}\}$ according to $p(f_l|x_B^n,f^{l-1})$.

  $\quad$(ii) Two decoders, one for Alice (decoder 1) and the other for Bob (decoder 2). {After receiving
  the $q$ rounds of transmissions (i.e., $F^q=\{F_1,\cdots,F_q\}$) over the public channel,}
  decoder 1 generates a
  random key $K_A$ according to $K_A=K_A(X_A^n,F^q)$; Decoder 2 generates a
  random key $K_B$ according to $K_B=K_B(X_B^n,F^q)$.
    \item[$\bullet$] A private key rate $R$ is said to be {\em achievable} if there exists a $(2^{n\tilde{R}_1},\cdots,2^{n\tilde{R}_q})$ code such that
        \begin{align}
          &Pr(K_A\neq K_B)\leq \epsilon,\label{decoder-requirement}\\
          &\frac{1}{n}H(K_A)\geq R-\epsilon,  \label{key-rate-requirement}\\
          & \frac{1}{n}H(K_A)\geq \frac{1}{n} \log |\mathcal{K}_A|-\epsilon, \label{uniformly}\\
          &\frac{1}{n}I(K_A;X_m^n,F^q)\leq \epsilon,\textrm{ for } \forall m\in\{1,\cdots,M\},\label{key-requirment}
        \end{align}
      where $|\mathcal{K}_A|$ denotes the size of the alphabet of the key $K_A$. Note that
       the secrecy constraints in \eqref{key-requirment} implies that the relays are  assumed to be {\em non-colluding}.
   \item[$\bullet$] The private key capacity $C_K$ is the supremum of all achievable rates $R$.
   $C_K^{(d)}$ is used to denote the private
    key capacity with deterministic encoding and key generation functions.
   According to \cite{csiszar2000common}, $C_K^{(d)}=C_K$, which means that randomization is useless for key
   generation in the addressed source model.
\end{itemize}

\section{Private Key Capacity of the PIN model}\label{iii}
For simplicity, we first define
\begin{align}\label{I_i}
  I_i=\min\left\{I(Y_{i,A};Y_{A,i}),I(Y_{i,B};Y_{B,i})\right\}, \forall i\in \{1,\cdots, M\}.
\end{align}
Furthermore, these parameters are ordered according to $I_{(1)}\leq I_{(2)} \leq \cdots \leq I_{(M)}$. Then
the private key capacity for the considered scenario is given in the following theorem.
\begin{theorem}\label{theorem_key_capacity}
  For the considered  PIN model with $M$ relays, the private key capacity is given by
  \begin{align}
    C_K=&\sum_{i=1}^M I_{i}- \max_{m\in\{1,\cdots, M\}}I_{m} \label{C_K(1)}\\
    =&\sum_{i=1}^{M-1} I_{(i)}.
  \end{align}
\end{theorem}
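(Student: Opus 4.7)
The plan is to prove the two equalities. The second is immediate: after sorting so that $I_{(1)} \le \cdots \le I_{(M)}$, subtracting the largest term $I_{(M)} = \max_m I_m$ from $\sum_i I_i$ leaves $\sum_{i=1}^{M-1} I_{(i)}$. So I would focus on establishing $C_K = \sum_i I_i - \max_m I_m$ and split this into an achievability bound and a converse.

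For achievability, I would realize the RB-based algorithm in two stages. In Stage~1, for each relay $m$, the triple (Alice, Bob, relay $m$) uses the pairwise sources $(Y_{m,A}^n, Y_{A,m}^n)$ and $(Y_{m,B}^n, Y_{B,m}^n)$ together with Slepian--Wolf style public transmissions to agree on a common message $W_m$ at a rate arbitrarily close to $I_m$; the PIN independence across relays makes the $W_m$ mutually independent, and $W_m$ is known to relay $m$ but is independent of $X_j^n$ for $j \ne m$. In Stage~2, a random binning at rate $R = \sum_i I_i - \max_m I_m$ is applied to $(W_1, \ldots, W_M)$ to produce the key $K$. A standard conditional leftover-hash / balanced-coloring argument, applied with relay $m$'s side information $W_m$ whose rate is at most $\max_m I_m$ (exactly the amount discounted), yields $\tfrac{1}{n} I(K; X_m^n, F^q) \to 0$ for every $m$; uniformity and the decoding constraints follow from the same binning construction, while secrecy against Eve is automatic since Eve only sees codebook and bin indices.

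For the converse, I would establish, for each fixed $m$, the single bound $C_K \le \sum_{i \ne m} I_i$, and then take the minimum over $m$. For fixed $m$, construct an enhanced source model in which the secrecy constraint is retained only against relay $m$ and, for every other relay $i \ne m$, one of the side observations is revealed to Alice or to Bob so that the enhancement can only increase the achievable private key rate while limiting relay $i$'s effective contribution to $I_i$. Starting from $n(R - \epsilon) \le H(K_A)$ and invoking \eqref{key-requirment} against relay $m$ yields $n(R - \epsilon) \le I(K_A; X_A^n \mid X_m^n, F^q) + n\epsilon$, which the PIN pairwise-independence structure, combined with a standard single-letterization for the interactive $q$-round protocol $F^q$, decomposes into a sum of per-relay contributions each bounded by $I_i$ after the enhancement.

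The main obstacle will be the converse. The delicate step is designing the $M$ enhancements so that, for each $m$, every relay $i \ne m$ is limited by the min-quantity $I_i = \min\{I(Y_{i,A}; Y_{A,i}), I(Y_{i,B}; Y_{B,i})\}$ rather than just one of the two mutual informations. The presence of the $q$-round interactive public protocol $F^q$ means the per-relay decomposition is not automatic and must carefully track which information flows are bottlenecked on the Alice side versus the Bob side; handling this symmetrically across all $m$ is precisely what forces the use of $M$ distinct enhanced models instead of a single direct argument.
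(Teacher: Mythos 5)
Your proposal matches the paper's proof in both structure and substance: the same key-agreement step producing $M$ mutually independent common messages $W_m$ at rates $I_m$ (each known to relay $m$ only), the same random binning at rate $\sum_i I_i - \max_m I_m$ with a balanced-coloring/equivocation argument discounting exactly $\max_m I_m$ against the worst relay, and the same converse via $M$ enhanced models that keep the secrecy constraint only at relay $m$ and reveal each other relay's weaker-side observation to Alice or Bob. The approach is correct and essentially identical to the paper's.
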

\begin{proof}
The achievability part is proved by a novel RB-based key generation algorithm that is based on two steps: key agreement
and  key distillation.
In the key agreement step, Alice and Bob can agree on $M$ common messages, each of which is revealed to a certain relay.
In the private key distillation step, these common messages will be mapped into the final private key via a RB-based private-key codebook.

The converse part is proved by deriving the  upper bounds of $M$ symmetric enhanced channels. Each of these enhanced channels
relaxes the secrecy constraints and assumes Alice and Bob to be genie-aided (i.e., knowing part of the sources
observed by the relays).

The details of the proof will be provided as follows.
\end{proof}


\vspace{-1em}
\subsection{Proof of Achievability}\label{achievablility}
Algorithm 1 briefly shows the  achievable scheme that is based on two steps: key agreement and  key distillation.
Let $R_{A,i}=I(Y_{A,i},Y_{i,A})-\epsilon$, $R_{B,i}=I(Y_{B,i},Y_{i,B})-\epsilon$  for $1\leq i \leq M$; $R_i=\min\{R_{i,A},R_{i,B}\}=I_i-\epsilon$, and they are ordered according to
$R_{(1)}\leq \cdots \leq R_{(M)}$. Besides, $R_{key}=\sum_{i=1}^{M-1} R_{(i)}$.
  \begin{figure}[htbp]

\hrulefill

{\bf Algorithm 1:}  Algorithm of Relay-oblivious Key Generation

    \hrulefill

 Step 1: Key Agreement:
 \begin{enumerate}[$\quad\bullet$]
   \item Alice and Relay $i$  agree on a pairwise key $W_{A,i}$ from the correlated observations $(Y_{i,A}^n,Y_{A,i}^n)$;
    Bob and Relay $i$  agree on a pairwise key $W_{B,i}$ from the correlated observations $(Y_{i,B}^n,Y_{B,i}^n)$,
   $i=1,\cdots M$.
   \item  Relay $i$ sends $W_{A,i} \oplus W_{B,i}$  over the public channel, so Alice and Bob can obtain both  $W_{A,i}$
   and $ W_{B,i}$,   $i=1,\cdots M$. Then they will choose the one with a smaller size as the common message, denoted as $W_i\in\mathcal{W}_i$.
  \end{enumerate}

 Step 2: Key Distillation:
\begin{enumerate} [   $\ \; \bullet$]
\item In advance,  randomly  grouped all the sequences $w^M$ in $\mathcal{W}^M$ into $2^{n(R_{key}-\epsilon)}$ bins each with
equal amount of  codewords. All the other nodes also know this private-key codebook.
   \item Alice and Bob find the sequence  $W^M=(W_1,\cdots,W_M)$ in the RB based private-key codebook,
     and choose its bin number  as the final private key.
     \end{enumerate}

 \hrulefill
    \end{figure}

\subsubsection{Key Agreement}
In the key agreement step,  Alice and Bob will agree on $M$ common messages.

 First, each relay $i$ and Alice agree on a pairwise key $W_{A,i}$  using their correlated sources $(Y_{A,i}^n,Y_{i,A}^n)$;
 each relay and Bob agree on a pairwise key  using  their correlated sources $(Y_{B,i}^n,Y_{i,B}^n)$.
According to the standard techniques \cite{ahlswede1993common}\cite{nitinawarat2010secret},
 each pairwise key  $W_{A,i}$  ($W_{B,i}$) is generated using  Slepian-Wolf coding and public
transmission $F_{A,i}$  ($F_{B,i}$). Moreover, the pairwise keys  $W_{A,i}$ and $W_{B,i}$ have the following properties\cite{ye2007group,nitinawarat2010secret}:
\begin{itemize}
\item[i)] They
can achieve the rates
$R_{A,i}$ and $R_{B,i}$, respectively;
\item[ii)] They are uniformly distributed and can be decoded by
both Alice and Bob correctly;
\item[iii)] The pairs $\{(W_{\alpha,i},F_{\alpha,i})_{\alpha\in\{A,B\},i\in\{1,\cdots,M\}}\}$
are mutually independent, due to the definitions of  the PIN model.
\end{itemize}

    Second, each relay $i$ sends out  $W_{A,i} \oplus W_{B,i}$
     over the public channel, so Alice and Bob can obtain both the two pairwise keys, and choose the one with a smaller size as the common message, denoted as $W_i$.
Hence the rate of each common message $W_i$ is $R_i$. According to \cite{lai2012cooperative},
\begin{align}\label{security_constraint}
  \frac{1}{n} I(W_1,\cdots,W_M;F^q)\leq \epsilon_1.
\end{align}
\subsubsection{Key Distillation}
 In the  key distillation step,  both Alice and Bob map  all the insecure
   common messages  assembled  from the key agreement step into the unique codeword in the private-key
   codebook, and set the bin number of this codeword as
  the final private key. Note that such a private-key codebook is generated based on random binning, so it provides necessary randomness such that  the bin number is secret from all the relays and Eve.
  \begin{Remark}
 The main difference between the proposed algorithm and the one in \cite{lai2012cooperative} lies in the key distillation step:
 the former is based on the RB process and the latter is based on an XOR process.
 In \cite{lai2012cooperative},  Alice and Bob concatenate $(W_1\oplus W_2,\cdots,
   W_{M-1}\oplus W_{M})$ as the final private key in the key distillation step. Here $M$ is assumed
    to be even.
    \end{Remark}

We will provide more details of the RB-based codebook in the following.

{\bf Codebook Generation}:
Let $w_i\in \mathcal{W}_i=\{1,\cdots,2^{nR_i}\}$,  $w^M=(w_1,\cdots,w_M)$.
 Then, based on the concept of random binning,  the private-key codebook can be constructed.
 Specifically,   randomly and uniformly partition all
the elements $w^M$ in set
$\mathcal{W}^M=\mathcal{W}_1\times\mathcal{W}_2
  \times\cdots \times \mathcal{W}_M$ into $2^{n(R_{key}-\epsilon)}$ bins each with $2^{n(R_{(M)}+\epsilon)}$ codewords.
  So each codeword $w^M$ can be indexed as $w^M(k,\tilde{k})$, where
  $k\in\{1,\cdots,2^{n(R_{key}-\epsilon)}\}$, $\tilde{k}\in\{1,\cdots,2^{n(R_{(M)}+\epsilon)}\}$. Fig. \ref{random-binning}
  illustrates the binning assignment for the private-key codebook, denoted by $\mathcal{C}$, that is known by
all nodes (including Eve).

\vspace{-0.3em}
\begin{figure}[htbp]\centering
    \epsfig{file=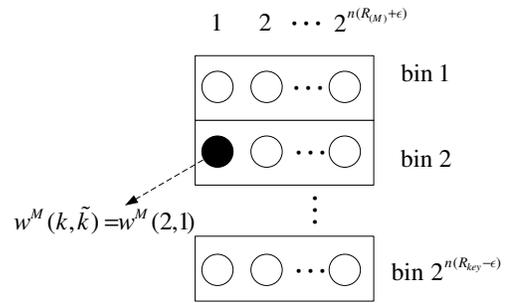, width=0.37\textwidth,clip=}
\caption{The binning assignment for the  private-key codebook, where $w^M=(w_1,\cdots, w_M)\in\mathcal{W}^M$, $w_i\in \{1,\cdots,2^{nR_i}\}$.}\label{random-binning}
\end{figure}
\vspace{-0.5em}

{\bf Decoding and key generation:} Based on the common messages collected in the key agreement step, Alice and Bob
 can find their corresponding indices in the private-key codebook.
  Specifically, knowing $(W_1,\cdots,W_M)$, Alice finds the index pair $(k,\tilde{k})$ from the
   private-key codebook such that $w^M(k,\tilde{k})=(W_1,\cdots,W_M)$. Then, it sets its key $K_A=k$.
Similarly, Bob can also correctly find
the key $K_B=k$. Since the error probability of the event that Alice and Bob share the same $(W_1,\cdots,W_M)$
 is insignificant, the error probability $P(K_A
\neq K_B)$ is arbitrarily small as $n\rightarrow \infty$.

 {\bf Analysis of the key rate}:
  Since  the private-key codebook
 is  based on the random binning process,
$K_A$ is  uniformly distributed over $\{1,\cdots, 2^{n(R_{key}-\epsilon)}\}$ averaged over the codebook
(i.e., $\mathcal{C}$).
 Therefore, it can be obviously obtained that $H(K_A|\mathcal{C})=n(R_{key}-\epsilon)$.

  {\bf Analysis of the secrecy constraints}: For  $\forall m\in\{1,\cdots,M\}$, we will prove that the generated private key is secret from relay
 $m$.
 Define $W^M=(W_1,\cdots,W_M)$.
 Then,  averaged over $\mathcal{C}$,  we have
 \begin{align}\label{equivocation1}
I(K_A;F^q, X_m^n|\mathcal{C})& \stackrel{(a)}{\leq} I(K_A;F^q, W_m|\mathcal{C})  \nonumber\\
    &\leq  I(K_A; W_m|\mathcal{C}) + I(K_A,W^M;F^q|W_m,\mathcal{C}) \nonumber\\
 &\stackrel{(b)}{\leq}  I(K_A; W_m|\mathcal{C}) + n\epsilon_1 \nonumber\\
  &=I(K_A;W_m|\mathcal{C})+n\epsilon_1.
  \end{align}
  where $(a)$ is due to the fact that $X_m^n-(W_m,F^q)-K_A$ is a Markov chain;
  $(b)$ is due to \eqref{security_constraint} and the fact that $K_A$ is determined by $W^M$ for a given codebook.
  Furthermore,
  \begin{align}
    &I(K_A;W_m|\mathcal{C})\!=\!I(K_A,W^M;W_m|\mathcal{C})\!-\!I(W^M;W_m|K_A,\mathcal{C})\nonumber\\
    &\!=\!I(W^M;W_m|\mathcal{C})\!-\!H(W^M|K_A,\mathcal{C})\!+\!H(W^M|W_m,K_A,\mathcal{C})\nonumber\\
     &\!=\!H(W_m|\mathcal{C})\!-\!H(W^M|K_A,\!\mathcal{C})\!+\!H(W^M|W_m,\!K_A,\!\mathcal{C}).\label{equivocation}
  \end{align}
 For the first term, obviously we have
  \begin{align}\label{term_one}
    H(W_m|\mathcal{C})= nR_m.
  \end{align}
 Since   $H(W_i|\mathcal{C})= nR_i$, we have $H(W^M|\mathcal{C})= n\sum_{i=1}^MR_{(i)}$. So the second term can be obtained as
  \begin{align}\label{term_two}
    H(W^M|K_A,\mathcal{C})&=H(W^M|\mathcal{C})+H(K_A|W^M,\mathcal{C})-H(K_A|\mathcal{C})\nonumber\\
    &=H(W^M|\mathcal{C})-H(K_A|\mathcal{C})\nonumber\\
    &= n\sum_{i=1}^MR_{(i)}-n(R_{key}-\epsilon)\nonumber\\
    &=n(R_{(M)}+\epsilon).
  \end{align}
  The third term is bounded in the following lemma.
  \begin{lemma}\label{lemma_thirdterm}
    When $R_{(M)}=\max\{R_1,\cdots,R_M\}$ and $n$ is sufficiently large,
    \begin{align}\label{term_3}
    H(W^M|W_m,K_A,\mathcal{C})\leq n(R_{(M)}-R_m+\delta(\epsilon)). \end{align}
  \end{lemma}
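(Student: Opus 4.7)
The plan is to compute the conditional entropy directly via the combinatorial interpretation of random binning, and then control the resulting bin-slice populations by a Chernoff-type concentration argument.

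First I would observe that, conditional on $(W_m=w_m, K_A=k, \mathcal{C}=c)$, the tuple $W^M$ is uniform on the set of codewords in bin $k$ whose $m$-th coordinate equals $w_m$; call the cardinality of that set $N(w_m,k,c)$. Hence
\begin{align*}
H(W^M\mid W_m,K_A,\mathcal{C}) = \mathbb{E}\bigl[\log N(W_m,K_A,\mathcal{C})\bigr].
\end{align*}
Since the pairwise keys $W_1,\ldots,W_M$ are mutually independent and uniform (property iii in Section~III-A), and $R_{(M)}=\max_i R_i$ gives $\sum_{i\neq m}R_i = R_{key}+R_{(M)}-R_m$, the slice $\{w^M:\text{$m$-th coordinate}=w_m\}$ has cardinality $2^{n(R_{key}+R_{(M)}-R_m)}$; averaging over the uniform random partition $\mathcal{C}$ into $B=2^{n(R_{key}-\epsilon)}$ equal-sized bins, the expected slice-intersection with each bin is
\begin{align*}
\mu := \mathbb{E}_{\mathcal{C}}\bigl[N(w_m,k,\mathcal{C})\bigr] = 2^{n(R_{(M)}-R_m+\epsilon)} \;\geq\; 2^{n\epsilon},
\end{align*}
uniformly in $(w_m,k)$, using the hypothesis $R_m\leq R_{(M)}$.

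Next I would invoke a Chernoff (equivalently, a hypergeometric concentration) bound together with a union bound over the $2^{nR_m}\cdot B$ pairs $(w_m,k)$ to show that, with probability at least $1-\exp(-\Omega(\epsilon^2 2^{n\epsilon}))$ over $\mathcal{C}$, the inequality $N(w_m,k,\mathcal{C})\leq(1+\epsilon)\mu$ holds uniformly. On this typical event,
\begin{align*}
\log N(W_m,K_A,\mathcal{C}) \leq n(R_{(M)}-R_m+\epsilon)+\log(1+\epsilon) \leq n(R_{(M)}-R_m+\delta(\epsilon)),
\end{align*}
while on the atypical event the crude bound $\log N\leq n(R_{key}+R_{(M)}-R_m)$ contributes only $o(1)$ once weighted by the atypical-event probability; averaging yields the claimed inequality.

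The main obstacle is the concentration step in the tight regime $R_m=R_{(M)}$, where $\mu=2^{n\epsilon}$ is only mildly super-polynomial. There the Chernoff exponent $\Omega(\epsilon^2\mu)$ must be argued to overpower the union-bound factor $2^{n(R_m+R_{key}-\epsilon)}$, which forces ``$n$ sufficiently large'' and is precisely where the hypothesis $R_{(M)}=\max_i R_i$ is actually invoked; the remaining indices $m$ with $R_m<R_{(M)}$ enjoy an exponentially larger $\mu$ and drop out effortlessly. The other ingredients—mutual independence of the $W_i$'s, uniformity of each $W_i$, and the equal-size uniform partition underlying $\mathcal{C}$—are inherited directly from the codebook construction, so once the slice-balance is secured the bound follows by substitution.
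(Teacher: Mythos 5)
Your proposal is correct, and it is essentially the argument the paper itself points to: the paper omits the proof entirely, deferring to the standard balanced-coloring/random-binning concentration technique (El Gamal--Kim, Lemma 22.3), and your counting-plus-Chernoff (hypergeometric, to account for the equal-size partition) derivation is precisely a worked-out instance of that technique, correctly identifying both where the hypothesis $R_{(M)}=\max_i R_i$ enters (so that $\mu\geq 2^{n\epsilon}$ and the doubly-exponential concentration beats the singly-exponential union bound) and how the atypical event is absorbed into $\delta(\epsilon)$.
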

  \begin{proof}
 This lemma can be proved using similar methods in  existing related works, such as \cite{el2011network} (proof of Lemma 22.3)  and \cite{zhang2014capacity}, with some necessary variations. The details are omitted here due to
 space limitation.
  \end{proof}

  Combining \eqref{equivocation1} with \eqref{equivocation}, \eqref{term_one}, \eqref{term_two} and \eqref{term_3}, we have
  \begin{align}\frac{1}{n}I(K_A;F^q, X_m^n|\mathcal{C})&\leq\frac{1}{n}I(K_A;W_m|\mathcal{C})+\epsilon_1\nonumber\\
 &\leq \delta(\epsilon)+\epsilon_1-\epsilon. \end{align}
 So the private key rate $R_{key}=\sum_{i}^{M-1}I_{(i)}-\epsilon$ is achievable.

\vspace{-1em}
  \subsection{Proof of Converse}
The calculation of the upper bound is based on $M$ symmetric enhanced channels. For the $m$-th enhanced source model, $m=1,\cdots, M$,
    we only consider the secrecy constraint on relay $m$,
    and ignore the secrecy constraints on all the other relays.
    Moreover, Alice and Bob are assumed to know  the  observations of two
     subsets of relays a priori, respectively. The definitions of the two subsets are given as follows.

For a given  $m\in\{1,\cdots, M\}$, we will form two sets of nodes, i.e., $\mathcal{A}_{]m[}$ and $\mathcal{B}_{]m[}$ in the next.
First, allocate Alice and
Bob to
$\mathcal{A}_{]m[}$ and $\mathcal{B}_{]m[}$, respectively.
Second, for relay $i$, $i\neq m$, if $I(Y_{A,i};Y_{i,A})>I(Y_{B,i};Y_{i,B})$, allocate it
to $\mathcal{A}_{]m[}$; otherwise, allocate it to $\mathcal{B}_{]m[}$. So $I(Y_{B,i};Y_{i,B})=\min\{I(Y_{A,i};Y_{i,A}),I(Y_{B,i};Y_{i,B})\}$ if relay $i$ lies in $\mathcal{A}_{]m[}$, and  $I(Y_{A,i};Y_{i,A})=\min\{I(Y_{A,i};Y_{i,A}),I(Y_{B,i};Y_{i,B})\}$ if relay $i$ lies in $\mathcal{B}_{]m[}$.

Then, assume without loss of generality that relays $A_1$, $A_2$, $\cdots$, $A_j$ are allocated to $\mathcal{A}_{]m[}$,
and relays $B_{1}$, $B_{2}$, $\cdots$, $B_{M-1-j}$ are allocated to $\mathcal{B}_{]m[}$,
$0\leq j \leq M-1$\footnote{If $j=0$, $\{A_1,\cdots,A_j\}=\emptyset$ and  $\mathcal{A}_{]m[}$ = \{Alice\};
if $j=M-1$, $\{B_1,\cdots,B_{M-1-j}\}=\emptyset$ and  $\mathcal{B}_{]m[}$ = \{Bob\}.}. Here
  $\{A_1,\cdots,A_j\}\bigcap \{B_{1},\cdots,B_{M-1-j}\}=\emptyset$ and
   $\{A_1,\cdots,A_j\}\bigcup \{B_{1},\cdots,B_{M-1-j}\}=\{1,\cdots,m-1,m+1,\cdots, M\}$.
In other words, $\mathcal{A}_{]m[}$ = \{Alice, relays $A_1$,  $\cdots$, $A_j$\};
$\mathcal{B}_{]m[}$ =\{Bob, relays $B_{1}$,  $\cdots$, $B_{M-1-j}$\}.
Now, by the max-flow principle  \cite{ye2007group}, the max follow between the two sets $\mathcal{A}_{]m[}$
and $\mathcal{B}_{]m[}$ can be expressed as $\sum_{i=1}^m I_i -I_m$, which is
the upper bound of the $m$-th enhanced channel. Due to space limitation, the details are omitted here.

Choosing the smallest bounds among all the $M$ enhanced channels,
we can obtain $C_K\leq \sum_{i=1}^M I_i -\max_{m\in\{1,\cdots,M\}}I_m$.
%
\vspace{-1em}
\section{Key Generation in Wireless Network}\label{iv}

In this section, we will extend the   RB-based algorithm proposed for the PIN model
into the wireless network, and use the estimates of wireless fading channels as
source observations for private key generation.
\vspace{-1em}
\subsection{Model}
The considered wireless network can be
 viewed as a practical example of the PIN model in
Section \ref{ii}. All the nodes have a single antenna and are half-duplex constrained.
 In this wireless network, it is assumed that there is no direct link     between
Alice and Bob, since they are located far from each other.
Denote $h_{A,i}$  ($h_{B,i}$)   as the  fading channel gains between  relay $i$  and Alice (Bob).
All channels are assumed to be reciprocal. It is reasonable to assume that all the fading channel gains and noise are  random variables and
 independent of each other. An ergodic {\em block fading} model is considered, in which all the channel gains remain
  constant for a block of $T$ symbols and change randomly to other independent values after the current
   block. For simplicity, we assume $h_{A,i}\sim \mathcal{N}(0,\delta_{A,i})$, $h_{B,i}\sim \mathcal{N}(0,\delta_{B,i})$.
   Moreover, none of the nodes knows the values of $h_{A,i}$ and $h_{B,i}$ a priori, but all the nodes know
   their statistics.

Assume that terminals transmit in a time-division manner.
  For  $L$ channel uses, let $\mathbf{S}_i=[s_i(1),\cdots,s_i(L_i)]^T$,  $\mathbf{S}_A=[s_A(1),\cdots,s_A(L_A)]^T$ and
   $\mathbf{S}_B=[s_B(1),\cdots,s_B(L_B)]^T$ denote the signals sent by relay $i$,
   Alice and Bob, respectively, where $i=1,\cdots, M$, and $L_A+L_B+\sum_{i=1}^ML_i=L$. For simplicity, we consider an equal power constraint
   for the legitimate terminals, that is
   \begin{align}
     \frac{1}{L_i} \mathbb{E}\{\mathbf{S}_i^T\mathbf{S}_i\},
     \frac{1}{L_A} \mathbb{E}\{\mathbf{S}_A^T\mathbf{S}_A\},
     \frac{1}{L_B} \mathbb{E}\{\mathbf{S}_B^T\mathbf{S}_B\}\leq P.
   \end{align}

\vspace{-1em}
\subsection{Proposed RB-based Algorithm}
The proposed RB-based  algorithm (Algorithm 1) can be extended to wireless networks for private key generation. Briefly speaking,
all the relays, Alice and Bob take turns to broadcast training sequences. After the channel estimation step,
Alice and Bob will generate the private key using the RB-based scheme in Algorithm 1 (Section \ref{iii}).
  Now, we will explain the channel estimation step in more detail. Fig. \ref{training} shows the time frame
  for the training of the proposed scheme in each fading block. Each fading block is divided into $M+2$ time slots, and
  the numbers of symbols in these  time slots are  $T_1,\cdots,T_M$, $T_A$, $T_B$,  respectively,
  where $T_A+T_B+\sum_{i=1}^MT_M=T$. Suppose relay $i$, Alice and Bob sends known training sequences $\mathbf{S}_i$ of size
  $1\times T_i$, $\mathbf{S}_A$ of size
  $1\times T_A$ and $\mathbf{S}_B$ of size
  $1\times T_B$, respectively. The energy of each sequence is $||\mathbf{S}_i||^2=T_iP$,  $||\mathbf{S}_A||^2=T_AP$,
  $||\mathbf{S}_B||^2=T_BP$, where $||\cdot||$ denotes the Euclidean norm.
\begin{figure}[htbp]\centering
    \epsfig{file=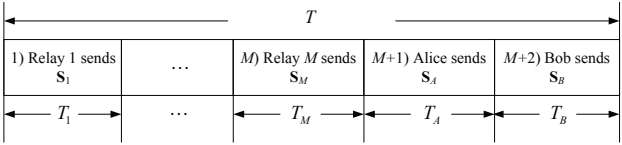, width=0.47\textwidth,clip=}
\caption{Time frame for the training phase  in each fading block.}\label{training}
\end{figure}

From $n$ fading blocks, Alice can obtain the estimates $(\tilde{h}_{1,A}^n,\cdots,\tilde{h}_{M,A}^n)$;
  Bob can obtain the estimates $(\tilde{h}_{1,B}^n,\cdots,\tilde{h}_{M,B}^n)$;
  relay $i$ can obtain the estimates $(\tilde{h}_{A,i}^n,\tilde{h}_{B,i}^n)$, $i=1,\cdots, M$.
  These estimates are noisy versions of the corresponding fading channels. The details of this channel estimation step
   are omitted here due to space limitation, and similar works can be found in \cite{lai2012cooperative,zhou2014secret}.
  The rate of each pairwise key $W_{\alpha,i}$ can be calculated as
         \begin{align}\label{I_i^G}
        R_{\alpha,i}^G&=\frac{1}{2} \log\left(1+\frac{T_iT_{\alpha}P^2\delta_{\alpha,i}^4}{\delta^4+(T_i+T_{\alpha})
        \delta^2\delta_{\alpha,i}^2P}\right), \nonumber\\
        &\quad \forall \alpha\in\{A,B\}, i\in\{1,\cdots, M\},
      \end{align}
      where $\delta^2$ is the variance of each Gaussian noise.

Now, using the result in
      Theorem \ref{theorem_key_capacity}, the proposed RB-based algorithm achieves the private key rate $R_{key}^G$  for some tuple $(T_A,T_B,T_1,\cdots,T_M)$, which
       can be written as
      \begin{align}\label{R_kG}
        R_{key}^G=\frac{1}{T}\left(\sum_{i=1}^M I_i^G -\max_{i\in\{1,\cdots,M\}} I_i^G\right),
      \end{align}
      with  $I_i^G=\min\{R_{A,i}^G,R_{B,i}^G\}$.
      To further show the impact of the proposed scheme on the gain of the key rate, the multiplexing gain (introduced in \cite{lai2012cooperative}) is analyzed as following.
  \begin{corollary}
         For the considered wireless network with  $M$ relays, the  multiplexing gain of the private key rate
         achieved by the proposed         RB-based algorithm is   $M-1$.
      \end{corollary}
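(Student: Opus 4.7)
The plan is to compute the high-SNR asymptote of $R_{key}^G$ and read off its pre-log coefficient. I would fix an arbitrary feasible training-symbol allocation $(T_A,T_B,T_1,\ldots,T_M)$ with each entry a positive constant independent of $P$, and let $P\to\infty$ in the pairwise-rate formula for $R_{\alpha,i}^G$ given in equation \eqref{I_i^G}. The whole argument is an asymptotic expansion built on top of Theorem \ref{theorem_key_capacity}, which already supplies the exact key-rate expression \eqref{R_kG}.

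The argument of the logarithm defining $R_{\alpha,i}^G$ is a rational function of $P$ whose numerator scales as $P^2$ and whose denominator is dominated by its linear-in-$P$ term. A routine expansion shows the ratio behaves like $\frac{T_iT_\alpha \delta_{\alpha,i}^2}{(T_i+T_\alpha)\delta^2}\,P$ plus lower-order terms, so $R_{\alpha,i}^G = \tfrac12 \log P + O(1)$. Consequently $I_i^G = \min\{R_{A,i}^G, R_{B,i}^G\} = \tfrac12 \log P + O(1)$ for each $i$, and substituting into $R_{key}^G$ the sum $\sum_{i=1}^M I_i^G$ contributes $M \cdot \tfrac12 \log P + O(1)$, while $\max_i I_i^G = \tfrac12 \log P + O(1)$ cancels exactly one of those $\tfrac12 \log P$ terms. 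The net high-SNR behaviour is therefore $(M-1)\cdot \tfrac12 \log P + O(1)$, yielding multiplexing gain $M-1$ under the convention of \cite{lai2012cooperative}.

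The only real subtlety is verifying that the leading coefficient $\tfrac{T_iT_\alpha \delta_{\alpha,i}^2}{(T_i+T_\alpha)\delta^2}$ of the $P$ term is strictly positive, so that $\tfrac12 \log P$ is genuinely dominant rather than being swallowed into the $O(1)$ remainder; this is immediate because $\delta_{\alpha,i}^2 > 0$ and the training lengths can be taken at least one. I do not expect any deeper obstacle, since the conclusion reduces to a straightforward scaling analysis. The contrast with the XOR-based scheme of \cite{lai2012cooperative}, whose multiplexing gain saturates at roughly $M/2$, comes for free from the same calculation: XOR distillation pairs relays and keeps only the weaker of each pair, whereas RB-based distillation loses only a single relay's worth of rate in total.
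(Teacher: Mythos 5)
Your proposal is correct and follows essentially the same route as the paper: expand \eqref{I_i^G} at high SNR to get $R_{\alpha,i}^G=\tfrac12\log P+O(1)$, hence each $I_i^G$ contributes one pre-log unit and the subtracted maximum removes exactly one, giving $M-1$ against the benchmark $R_s\approx\frac{1}{2T}\log P$. The only cosmetic slip is dropping the $1/T$ factor in \eqref{R_kG} when stating the asymptote of $R_{key}^G$ itself, but since the benchmark $R_s$ carries the same $1/T$, it cancels in the ratio and the conclusion is unaffected.
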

      \begin{proof}
        Based on the definition of in \cite{lai2012cooperative}, the multiplexing gain of the proposed algorithm should be
        $\lim_{P\rightarrow \infty}\frac{R_{key}^G}{R_s}$, where $R_s\approx \frac{1}{2T}\log P$ as ${P\rightarrow \infty}$.
        From Eq. \eqref{I_i^G},  it is easy to obtain that  $\lim_{P\rightarrow \infty}\frac{ R_{\alpha,i}^G}{R_s}=T$, so we have
        $\lim_{P\rightarrow \infty}\frac{R_{key}^G}{R_s}=M-1.$
        \end{proof}

      \begin{Remark}\label{Remark_mg}
       If there is no secrecy constraints at the relays, the multiplexing gain is $M$ \cite{lai2012cooperative}.
       So the proposed RB-based algorithm sacrifices one multiplexing gain for
    satisfying the secrecy constraints at all the $M$ relays.  This loss is insignificant  because  {\em only one} multiplexing gain
    is sacrificed, no matter how large $M$ is.
    But for the XOR-based algorithm  in \cite{lai2012cooperative} (Corollary 10), its
    multiplexing gain is $\lfloor M/2 \rfloor$ if there does exist the direct link between Alice and Bob.
    Therefore this existing scheme suffers a loss of $M/2$
    multiplexing gain in comparison with  the case without secrecy constraints at the relays. Hence the
    proposed RB-based scheme can effectively enhance the performance of the private key generation.
      \end{Remark}
\vspace{-0.5em}
\section{Conclusion}\label{vi}
 In this paper, we have investigated the problem of private key generation. A particular cooperative PIN model with $M+2$
 terminals is considered, where Alice, Bob and $M$ relays observe pairwise independent sources.
  Under the help of relays,  Alice and Bob wish to establish a private key that is secure from Eve and all relays.
  The private key capacity of this PIN model has been found. The achievability
  is proved via  a novel RB-based algorithm for generating the private key.
The upper bound of the private key capacity is obtained by considering $M$ enhanced source models.
Then,   we further consider a
cooperative wireless network, in which
estimates of wireless channels are regarded as the correlative source observations.
Compared to the  XOR-based algorithm
 in \cite{lai2012cooperative} whose multiplexing gain
is $\lfloor M \rfloor/2$, the proposed RB-based algorithm achieves a larger multiplexing gain  $M-1$.

\section*{Acknowledgements}
 The work of Peng Xu and Xuchu Dai was supported by the National Natural Science Foundation of China (NSFC)
under grant number 61471334.

\vspace{-0.5em}
\bibliographystyle{IEEEtran}
\bibliography{references}

\end{document}